\title{TITLE}
\author{AUTHOR}
\date{\today}
\theoremstyle{plain}
\newtheorem{thm}{Theorem}
\newtheorem{lem}{Lemma}
\theoremstyle{definition}
\theoremstyle{remark}
\newcommand{\BM}{\bold{(BM)}}
\newcommand{\NIG}{\bold{(NIG)}}
\begin{document}

\renewcommand{\include}[1]{}
\title{Asymptotic behavior of prices of path dependent options}
\author{
	Yuji Hishida (Mizuho Securities Co., Ltd),
 	Kenji Yasutomi (Ritsumeikan University) \\
	e-mail: yuji.hishida@mizuho-sc.com
	        yasutomi@se.ritsumei.ac.jp
}
\maketitle
\begin{abstract}
In this paper, we give a numerical method for pricing long maturity, 
path dependent options by using the Markov property for each 
underlying asset.
This enables us to approximate a path dependent option 
by using some kinds of plain vanillas. 
We give some examples whose underlying assets behave as some popular Levy 
processes. Moreover, we give some payoffs and functions used to
approximate them.

Key Words: path dependent option, Markov property, Levy process, Asian option, 
partial barrier option, asymptotic behavior
\end{abstract}

\section{Introduction}
In this paper, we give a numerical method for pricing 
some path dependent options by using the Markov property
for each underlying asset process.
Path dependent options are options whose payoff
at maturity depend on the past history of 
the underlying asset as well as the price at maturity.
Asian options, Lookback options and barrier options are
thier well-known examples.
Some of them as the above are difficult to calculate
analytically and numerically, and there have been numerous
studies on how to do this.

One numerical method of pricing path dependent options
uses Monte Carlo simulation.
It has an advantage in that with it, we can simulate the 
expectation of the payoff
without detailed discussion of the payoff type, 
but only with the distribution of the underlying asset.
N. Hilber, N. Reich, C. Schwab and C. Winter \cite{NNCC}
or Cont, R and  Tankov, P \cite{ContTankov}
give the Monte Carlo method when
market models are extended to Levy processes.
They mention that most of Levy processes
can be simulated approximately.
As seen from them,
the Monte Carlo method has the applicability
for many payoff types and many underlying asset models. 
However, the longer a maturity time is, the more
the calculation time is consumed.

Several analytical methods of path dependent options
have been also developed.
An analytical formula for standard barrier options is 
given in papers by Merton \cite{Merton} or Reiner and Rubinstein 
\cite{ReinerRubin} when the underlying asset behaves as 
the geometric Brownian motion. 
An analytical formula 
for partial barrier options is also given by
Armstrong \cite{Armstrong} or Heynen and Kat \cite{HeynenKat}
under the same conditions. 
When the underlying asset is more general such as 
a Levy process, it is not easy to find the exact
value of them. There are some studies about
this problem, for example,
Kudryavstev and Levendorski \cite{KudLev}
give the Fast Wiener-Hopf factorization method.
Asian options are difficult to calculate 
analytically. 
Geman and Yor \cite{GY} give 
a semi-analytical formula by using the Laplace transformation.
Linetsky \cite{Linetsky1} uses a spectral expansion approach, 
and Benhamou \cite{Benhamou} uses a convolution method 
for pricing them.
Albrecher and Predota \cite{AlbPredota} 
approximate the arithmetic
option price based on the moments of the average.

Most approaches of these methods depend on the payoff functions.
In contrast, a method using the Malliavin-Watanabe calculus given in 
Kumitomo and Takahashi \cite{KT0}
is valid for some path dependent options.
Bermin \cite{Hans1} and \cite{Hans2} show that the Malliavin calculus
approach can be applied for any square integrable payoff.

Our method is an analytical approximation approach.
In \cite{HY}, we give the asymptotic behavior of the 
prices of an Asian option about maturity time $T$
as the underlying asset behaves as the
geometric Brownian motion. 
When the maturity time $T \to \infty$, 
the expectation of the payoff for an Asian option can be 
approximated by that of the payoff for 
some linear combinations of plain vanillas; i.e.,
there exists a non-zero constant, $D$, such that 
when $T \to \infty$,
\begin{eqnarray*}
E[(\frac{1}{N}\sum_{i=1}^N S_{T-\tau_i} - K)^{+}]
\simeq
\frac{1}{N}\sum_{i=1}^N E[(S_{T-\tau_i} - K)^{+}] +
D \alpha(T)
\end{eqnarray*}
where the underlying asset is defined by 
$S_t = S_0\exp(\sigma W_t + (r-\frac{1}{2}\sigma^2)t )$,
$K$ is a strike price, $\{\tau_i\}$ is fixed time, 
$W=\{W_t\}_{0\leq t \leq T}$ is a one-dimensional 
Brownian motion, and $\alpha$ is defined by
\begin{eqnarray}
	\alpha(T) = \frac{1}{\sqrt{T}}\exp
	\left(- \frac{1}{2\sigma^2}(r-\frac{1}{2}\sigma^2) T
	\right). \label{decayBM}
\end{eqnarray}
In this paper, we generalize the above idea
to an almost universal situation; i.e.,
to some payoff functions and a class of
the underlying asset process.

To summarize our result of this paper,
let the underlying asset behave as positive 
Markov process $S_t$. 
We represent an expectation of a path dependent option with maturity 
time $T$ as
\begin{eqnarray*}
	A^{*}(T) = E[g( (S_s)_{ T-\tau \le s < T } )]
\end{eqnarray*}
where $g$ is a function from path space
which represents the payoff of the path dependent option.
Roughly speaking, if $S_t$ decays uniformly in $\mathbb{R}$; 
i.e., there exists $\alpha^{*} : [0,\infty) \to (0,\infty)$
such that
$
	\frac{P( dS_T )}{\alpha^{*}(T)} 
$ 
``convergence'' to non-trivial measure as $T\to\infty$,
and also if $g$ satisfys ``good'' conditions, 
there exists
$\widetilde A^{*}(T)$ and a constant, $C$,
such that the error term
between $A^{*}$ and $\widetilde A^{*}$ can be estimated 
when $T\to\infty$; i.e.,
\begin{equation}
	 \frac{ A^{*}(T) - \widetilde A^{*}(T) }{ \alpha^{*}(T) } \to C. 
\label{approx1}
\end{equation}
In practical sense, if $\widetilde A^{*}(T)$ is either easy to calculate
or is quoted in the market,
we easily obtain a numerical approximation of $A^{*}(T)$ from
\begin{equation*}
	A^{*}(T) \simeq  \widetilde A^{*}(T) + C\alpha^{*}(T).
\end{equation*}
This means that we can approximate a path dependent option
by using $\widetilde A^{*}$ and an error term.

Although this asymptotic approach is only valid 
for calculating the value of long maturity options,
it is beneficial because
\begin{enumerate}
\item We can get the value of $A^{*}(T)$  instantaneously
	since  $\alpha^{*}$ is written as elementary functions
	in most cases. \label{enu1}
\item We can apply this method to many path dependent options. \label{enu2}
\item We can apply this method to a large class of the underlying asset. \label{enu3}
\end{enumerate}
As an example of the first case,
when $S_t$ is geometric Brownian motion, $\alpha^{*}$ is 
given by (\ref{decayBM}) and written as elementary functions.
As an example of the second case,
we give the result in the case of an Asian option, 
the Lookback option, and the barrier option.
As an example of the third case, we give the result in case 
of a geometric Levy process, particularly for Brownian motion $\BM$
and for the Normal Inverse Gaussian process $\NIG$.
In our method, the longer a maturity time is,
the better the computation accuracy is.
That is because our method is based on the
asymptotic behavior of price as $T$ tends to $\infty$.
This is a notable result since most methods,
like Monte Carlo simulation, 
are effective when the variance of the underlying asset is small; i.e.,
the maturity time is short.

This paper is organized as follows. In Section~\ref{generalsection}, 
we present a theorem which gives the principle of our method
and give the short proof.
In Section~\ref{alphasection}, we give some exmaples of $\alpha^{*}$ 
when the underlying asset behaves as some popular Levy processes, 
and we see that these $\alpha^{*}$ are written by elementary functions
in these cases.
In Section~\ref{payoffsection}, we give two theorems which
help us how to find the approximation function $\widetilde A^{*}(T)$ 
when given $A^{*}(T)$. Also, we show that some path dependent options,
 such as an Asian option, a Lookback option and a barrier option, can be
applied these theorems.
In Section~\ref{proofsection}, theorems of Section~\ref{payoffsection}
are proved, and in Section~\ref{appendixsection}, we present some properties
of $\BM$ and $\NIG$ used in the proof.

\section{Generalized Principle} \label{generalsection}
We set the notations again.
Let an underlying asset process, $S_s$ be a positive 
Markov process.
For a fixed time $\tau > 0$, 
we denote the set of all c\'{a}dl\'{a}g functions with domain $[0,\tau]$ 
by $D([0,\tau])$. We also define an expectation of a path dependent option
whose maturity time is $T$ and whose monitoring period
is $[T-\tau, T]$ by
\begin{eqnarray*}
A^{*}(T) := E[g((S_s)_{T-\tau \le s \le T})]
\end{eqnarray*}
where $g$ is a function from $D([0,\tau])$ into $\mathbb{R}$.
For given $A^{*}(T)$, we also define $\widetilde A^{*}(T)$ by
\begin{eqnarray*}
\widetilde A^{*}(T) := E[\widetilde g((S_s)_{T-\tau \le s \le T})]
\end{eqnarray*}
where $\widetilde g$ is also a function from $D([0,\tau])$ into $\mathbb{R}$.
In this paper, we treat this $\widetilde g$ 
as the approximation function of the path dependent option.
%
For the sake of simplicity,
we use the notation $t:=T-\tau$,
\begin{eqnarray*}
&&A(t) := A^{*}(T) = E[g((S_s)_{t \le s \le t + \tau})], \\
&&\widetilde A(t) := \widetilde A^{*}(T)
  = E[\widetilde g((S_s)_{t \le s \le t + \tau})],
\end{eqnarray*}
and we see the asymptotic behavior as $t \to \infty$
instead of the asymptotic behavior as $T \to \infty$.
Under these conditions,
we derive the following theorem. 
\begin{thm}\label{thm_GP}
We assume two assumptions, \ref{conv} and \ref{integ}
about $(g,\widetilde g,S)$;
\renewcommand{\theenumi}{A\arabic{enumi}}
\begin{enumerate}
\item \label{conv}
	Process $S$ decays uniformly; that is, there exist
	measures $\nu$ and $\bar{\nu}$ and 
	a funcion, $\alpha : [0,\infty) \to (0,\infty)$, such that 
	\begin{eqnarray*}
		&& \nu_t \gg \bar{\nu}, \quad \nu \gg \bar{\nu}, \\
		&& \frac{d\nu_t}{d\bar{\nu}}
			\to \frac{d\nu}{d\bar{\nu}} \quad (t \to \infty)
	\end{eqnarray*}
	where 
	$\nu_t(M) := \frac{P(S_t \in M)}{\alpha(t)}$.
	We use the notaion $\gg$ to mean absolutely continuous, and
	$\frac{d\nu_t}{d\bar{\nu}}$ it the  Radon Nycodim derivative.
\item \label{integ}
	\begin{eqnarray*}
		 \int |\epsilon(x)| \sup_{t} \frac{d\nu_t}{d\bar{\nu}}(x) 
		\, \bar{\nu}(dx) < \infty
	\end{eqnarray*}
	where
	\begin{eqnarray*}
		 \epsilon(x) := E[g( (S_s^x)_{ 0 \le s \le \tau} ) 
    	-\widetilde g( (S_s^x)_{ 0\le s \le \tau} )]
	\end{eqnarray*}
	and $S_s^x$ is a process starting at $x$; i.e., 
        $S_s^x := \frac{x}{S_0} S_s$.
\end{enumerate}
	Then it follows that when $t \to \infty$,
	\begin{eqnarray}\label{approx2}
		\frac{ A(t) - \widetilde A(t) }{ \alpha(t) } 
		\to \int \epsilon(x) \, \nu(dx).
	\end{eqnarray}
\end{thm}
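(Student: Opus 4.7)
The plan is to reduce the ratio $(A(t)-\widetilde A(t))/\alpha(t)$ to an integral against $\nu_t$ by conditioning on $S_t$, and then pass to the limit via dominated convergence using assumptions \ref{conv} and \ref{integ}.

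First I would use the Markov property of $S$. Conditioning on $S_t=x$, the segment $(S_s)_{t\le s\le t+\tau}$ has the same law as $(S^x_u)_{0\le u\le \tau}$ by the definition of $S^x$ given in \ref{integ}. Hence, defining $G(x):=E[g((S^x_u)_{0\le u\le\tau})]$ and $\widetilde G(x):=E[\widetilde g((S^x_u)_{0\le u\le\tau})]$, the tower property gives
\begin{equation*}
A(t)-\widetilde A(t)=E[G(S_t)-\widetilde G(S_t)]=E[\epsilon(S_t)]=\int \epsilon(x)\,P(S_t\in dx).
\end{equation*}
Dividing by $\alpha(t)$ and recalling the definition $\nu_t(M)=P(S_t\in M)/\alpha(t)$ from \ref{conv} yields
\begin{equation*}
\frac{A(t)-\widetilde A(t)}{\alpha(t)}=\int\epsilon(x)\,\nu_t(dx).
\end{equation*}

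Next, I would rewrite this integral against $\bar\nu$. By \ref{conv}, $\nu_t\ll\bar\nu$, so
\begin{equation*}
\int\epsilon(x)\,\nu_t(dx)=\int\epsilon(x)\,\frac{d\nu_t}{d\bar\nu}(x)\,\bar\nu(dx),
\end{equation*}
and similarly $\int\epsilon(x)\,\nu(dx)=\int\epsilon(x)\,\frac{d\nu}{d\bar\nu}(x)\,\bar\nu(dx)$. Now apply the dominated convergence theorem to the $\bar\nu$-integrals: the integrands converge $\bar\nu$-a.e.\ by the Radon-Nikodym convergence in \ref{conv}, and they are dominated by the $\bar\nu$-integrable function $|\epsilon(x)|\sup_t\frac{d\nu_t}{d\bar\nu}(x)$ from \ref{integ}. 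This yields \eqref{approx2}.

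The only nontrivial step is the Markov/conditioning reduction: the identification of the conditional law of $(S_s)_{t\le s\le t+\tau}$ given $S_t=x$ with that of $(S^x_u)_{0\le u\le\tau}$ is implicit in the definition $S^x_u=(x/S_0)S_u$, which fits naturally when $S$ has the multiplicative (geometric Lévy) structure that the paper targets. Once that identification is made, the rest is a standard dominated convergence argument, and the hypotheses \ref{conv} and \ref{integ} are tailored exactly to make it go through.
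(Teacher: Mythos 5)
Your proposal is correct and follows essentially the same route as the paper: the Markov property reduces $A(t)-\widetilde A(t)$ to $E[\epsilon(S_t)]=\int\epsilon\,d(\alpha(t)\nu_t)$, and the limit is then taken by dominated convergence with the dominating function supplied by \ref{integ}. Your explicit remark on identifying the conditional law of $(S_s)_{t\le s\le t+\tau}$ given $S_t=x$ with that of $(S_s^x)_{0\le s\le\tau}$ makes precise a step the paper leaves implicit, but the argument is the same.
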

We remark that, for $\alpha^{*}(T) := \alpha(t+\tau)$, since
\begin{eqnarray*}
 \frac{ A^{*}(T) - \widetilde A^{*}(T) }{ \alpha^{*}(T) }
 =
 \frac{ A(t) - \widetilde A(t) }{ \alpha(t) },
\end{eqnarray*}
(\ref{approx1}) means (\ref{approx2}).
\begin{proof}[Proof of the theorem]
	From the Markov property of $S$, we obtain
	\begin{eqnarray*}
		A(t) - \widetilde A(t)
		&=&
		E[g( (S_s)_{ t \le s \le t+ \tau} ) 
			-\widetilde g( (S_s)_{ t \le s \le t + \tau} )] \\
		&=&
		E[E[g( (S_s)_{ t \le s \le t+ \tau} ) 
			-\widetilde g( (S_s)_{ t \le s \le + \tau} ) | S_t ] ] \\
		&=& E[\epsilon(S_t)].
	\end{eqnarray*}
	Since 
	$\int |\epsilon(x)| \sup_{t} \frac{d\nu_t}{d\bar{\nu}}(x) 
	\, \bar{\nu}(dx) < \infty$,
	the Lebesgue convergence theorem implies that
	\begin{eqnarray*}
		\frac{ A(t) - \widetilde A(t) }{ \alpha(t) }
		&=&
		\int \epsilon(x) \frac{d\nu_t}{d\bar{\nu}}(x) \, \bar{\nu}(dx) \\
		&\to&
		\int \epsilon(x) \frac{d\nu}{d\bar{\nu}}(x) \, \bar{\nu}(dx) \\
		&=&
		\int \epsilon(x) \, \nu(dx).
	\end{eqnarray*}
\end{proof}
\section{Asymptotic Order of each model} \label{alphasection}
In this section, we consider \ref{conv}
in Theorem~\ref{thm_GP}. 
The existence of $\alpha(t)$ 
depends only on each process, $S$.
It is clear that $\alpha$ is unique in the sense of order; i.e.,
$\alpha(t)\sim \alpha^{\prime}(t)$ as $t\to\infty$
if $\alpha$ and $\alpha^{\prime}$ satisfy \ref{conv} 
for the same process.
We give some examples of $S_t$ satisfying \ref{conv}
represented as $S_t = e^{Z_t}$,
where $Z_t$ is a popular Levy process.
In the following argument, we regard measure $\bar{\nu}$ 
as a Lebesgue measure.

For the sake of simplicity, we discuss $Z$ instead of $S$.
Let $\hat\nu_t$ be a modified distribution of $Z_t$; i.e.,
$ \hat\nu_t(A) := \frac{P( Z_t\in A )}{\alpha(t)} $.
Then, since $S_t = e^{Z_t}$, we have that
$$
	\frac{d\hat\nu_t}{d\bar\nu}(z) 
	= \frac{d}{dz} \frac{P(Z_t\le z)}{\alpha(t)}
	= \frac{d}{dz} \frac{P(e^{Z_t}\le e^z)}{\alpha(t)}
	= \frac{de^z}{dz} \frac{d}{de^y} \frac{P(S_t\le e^z)}{\alpha(t)}
	= e^z \frac{d\nu_t}{d\bar\nu}(e^z).
$$
Thus, by substituting $e^y=x$, the condition \ref{conv} is replaced by
a condition about $Z$;
\begin{lem}
$\hat\nu_t \gg \bar{\nu}$, $\hat\nu \gg \bar{\nu}$, and 
\begin{equation} \label{conv_Z}
	\frac{d\hat\nu_t}{d\bar\nu}(z) \to \frac{d\hat\nu}{d\bar\nu}(z)
\end{equation}
for any $z$ imply \ref{conv}
for $\frac{d\nu}{d\bar\nu}(x) := \frac{1}{x} \frac{d\hat\nu}{d\bar\nu}(\log x)$.
\end{lem}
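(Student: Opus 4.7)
The plan is to transport the given hypotheses on the distribution of $Z_t$ through the $C^{\infty}$ bijection $z \mapsto e^z$ from $\mathbb{R}$ onto $(0,\infty)$ to obtain the corresponding statements for the distribution of $S_t = e^{Z_t}$. The key identity between the two Radon--Nikodym derivatives has already been displayed just above the lemma, so essentially only bookkeeping remains. First I would check absolute continuity: if $N \subset (0,\infty)$ is Lebesgue null then $\log N \subset \mathbb{R}$ is Lebesgue null (the image under a locally Lipschitz map of a null set is null), hence
$$
\nu_t(N) = \frac{P(S_t \in N)}{\alpha(t)} = \frac{P(Z_t \in \log N)}{\alpha(t)} = \hat\nu_t(\log N) = 0
$$
by the assumption $\hat\nu_t \gg \bar\nu$. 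Replacing $\hat\nu_t$ by $\hat\nu$ in the same argument yields $\nu \gg \bar\nu$.

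Next I would record the pointwise identity
$$
\frac{d\nu_t}{d\bar\nu}(x) = \frac{1}{x}\,\frac{d\hat\nu_t}{d\bar\nu}(\log x), \qquad x > 0,
$$
which is exactly the chain of equalities already written in the paragraph preceding the lemma, rearranged and solved for $d\nu_t/d\bar\nu$. Pointwise convergence of the densities then transfers immediately: for each fixed $x > 0$, setting $z = \log x$ in the hypothesis (\ref{conv_Z}) gives
$$
\frac{d\nu_t}{d\bar\nu}(x) = \frac{1}{x}\,\frac{d\hat\nu_t}{d\bar\nu}(\log x) \;\longrightarrow\; \frac{1}{x}\,\frac{d\hat\nu}{d\bar\nu}(\log x) = \frac{d\nu}{d\bar\nu}(x),
$$
where the last equality is just the definition of $d\nu/d\bar\nu$ supplied in the statement. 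This is the convergence required by \ref{conv}.

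There is no substantive obstacle here; the only point that needs a little care is to keep track of which space $\bar\nu$ is acting as Lebesgue measure on (the real line on the $Z$ side, the half-line $(0,\infty)$ on the $S$ side) and to justify the density identity as an instance of the change-of-variables formula for push-forwards under the smooth bijection $z \mapsto e^z$ rather than as a purely formal symbolic manipulation. Once that identity is in hand, both conclusions of \ref{conv} are immediate consequences of the corresponding hypotheses on $\hat\nu_t$.
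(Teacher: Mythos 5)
Your proposal is correct and follows essentially the same route as the paper, which justifies the lemma solely by the change-of-variables identity $\frac{d\hat\nu_t}{d\bar\nu}(z) = e^z \frac{d\nu_t}{d\bar\nu}(e^z)$ displayed immediately before the statement and then reads off the pointwise convergence at $z=\log x$. Your added care about absolute continuity (null sets mapping to null sets under the locally Lipschitz map $\log$) is a detail the paper leaves implicit, but it is the same argument.
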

\subsection{Brownian Model}
Let us define $Z_t = z_0 + \sigma W_t + \mu t$ where 
$z_0$, $\mu \in \mathbb{R}$, $\sigma > 0$, 
and $\{W_t\}_{0 \leq t < \infty}$ is a one-dimensional Brownian motion. 
Then
\begin{eqnarray*}
	\alpha(t)
	&=& 
	\frac{1}{\sqrt{t}}\exp(-\frac{\mu^2 t}{2\sigma^2}) \\
	\mbox{and} \quad 
	\frac{d\hat\nu}{d\bar\nu}(z)
	&=& \frac{1}{\sqrt{2\pi\sigma^2}}\exp({\frac{\mu z}{\sigma^2}}) 
\end{eqnarray*}
satisfy (\ref{conv_Z}).

\subsection{Normal Inverse Gaussian Model}
Let us define $Z_t = z_0 + W_{\mbox{\rm IG}(t)} 
+ \theta \mbox{\rm IG}(t)+ b t$
where $z_0$, $b \in \mathbb{R}$, 
and we denote the inverse Gaussian subordinator, $\mbox{\rm IG}$, by 
$\mbox{\rm IG}(t) = \inf\{s>0;B_s + \mu s > \delta t\}$, 
where $\mu \in \mathbb{R}$, $\delta > 0$, 
and $B$ is another Brownian motion independent of $W$. 
This process is called normal inverse Gaussian, see \cite{DA}. 
Then
\begin{eqnarray*}
	\alpha(t) 
	&=& t^{-\frac{1}{2}}
        \exp\Big(  t\Big( \mu\delta - \theta b - \sqrt{(b^2+\delta^2)(\theta^2+\mu^2)}
              \Big) \Big) \\
	\mbox{and} \quad
	\frac{d\hat\nu}{d\bar{\nu}}(z)
	 &=& 
		\frac{\delta}{ \sqrt{2\pi(b^2+\delta^2) } }
		\left( 
			\frac{\theta^2 + \mu^2}{ b^2+\delta^2 } 
		\right)^{\frac{1}{4}}
		e^{ \left(
				\theta
				+
				b
				\sqrt{ \frac{
					\theta^2 + \mu^2
				}{
					b^2 + \delta^2
				}}
			\right)
			(z-z_0)
		}
\end{eqnarray*}
satisfy (\ref{conv_Z}).
\subsection{Variance Gamma Model}
Let us define $Z_t = z_0 + \sigma W_{\gamma(t)} + \theta \gamma(t) + mt$, 
where $z_0 \geq 0$, $\theta$, $m \in \mathbb{R}$, and we denote 
the gamma process by $\gamma(t)$ with variance rate $\lambda$, see \cite{VG}. 
Then
\begin{eqnarray*}
\alpha(t) &=&
 	t^{-\frac{1}{2}}
	\left(
		\frac{1+\eta}{ 2+\nu\theta^2/\sigma^2}
	\right)^{\frac{t}{\nu}}
	e^{ ( \frac{1-\eta}{\lambda} - \frac{\theta m}{\sigma^2}) t } \\
	\mbox{and} \quad
   \frac{d\hat\nu}{d\bar\nu}(z) &=&
	\frac{1}{2}
	\sqrt{ 
		\frac{ 2+\lambda\theta^2/\sigma^2 }{ 2\pi \eta(1+\eta) \sigma^2 }
	}
 	\exp \left( 
		( \frac{\theta}{\sigma^2} + \frac{\eta-1}{m \lambda} )(z-z_0)
	\right) 
\end{eqnarray*}
satisfy (\ref{conv_Z})
where 
$
	\eta :=\sqrt{ 1 + \frac{ m^2 \lambda^2 }{\sigma^2} (\frac{2}{\lambda} + \frac{\theta^2}{\sigma^2} )}
$.
\section{Path dependent Payoff and its Approximation function} \label{payoffsection}
In this section, we consider \ref{integ}
in Theorem~\ref{thm_GP}.
To satisfy \ref{integ} for $(g,\widetilde g, Z)$,
we need to select a ``good" $\widetilde g$ 
for a given path dependent payoff, $g$.
The following theorems give classes of $(g,\widetilde g)$,
which satisfy \ref{integ}.
The first class includes, as examples, Asian options and a Lookback option.
\begin{thm}\label{payoffthm}
	Let $h:D( [0,\tau] ) \to \mathbb{R}$ satisfy
	\begin{eqnarray}
		&&\inf_{0\le s \le\tau} w(s) \le h(w) 
		\le \sup_{0\le s \le\tau} w(s), \ \label{hbounded} \\
		&& h(aw) = ah(w), \label{schalar}
	\end{eqnarray}
	for any $w \in D( [0,\tau] )$ and $a \in \mathbb{R}$,
	and let $(g,\widetilde g)$ represent
	\begin{eqnarray*}
		g(w) = (h(w)-K)^{+}, \quad
		\widetilde g(w) = C(e^{w(0)} - K/C )^{+},
	\end{eqnarray*}
	where $C := E[h( (S_t)_{0\le t \le \tau} )]$. 
	Then \ref{integ} holds
	in cases of $\BM$ and $\NIG$ with the parameter condition,
	$$
		\frac{ 2b }{ \delta + \sqrt{b^2+\delta^2} } < 1.
	$$
\end{thm}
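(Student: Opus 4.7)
The plan is to verify Assumption~\ref{integ} for the specific pair $(g,\widetilde g)$ of the statement by exploiting the scaling identity~(\ref{schalar}) to reduce $\epsilon(x)$ to an explicit one-dimensional expression, and then combining elementary bounds on $\epsilon$ with the closed-form densities tabulated in Section~\ref{alphasection}.

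First I would compute $\epsilon(x)$ explicitly. Setting $H := h((S_s)_{0\le s\le\tau})$ and taking $S_0 = 1$ for brevity, the scaling property~(\ref{schalar}) gives $h((S^x_s)_{0\le s\le\tau}) = xH$, so
\begin{equation*}
\epsilon(x) = E\bigl[(xH-K)^{+}\bigr] - (xC-K)^{+}, \qquad C := E[H].
\end{equation*}
Jensen's inequality applied to the convex map $u\mapsto (u-K)^{+}$ yields $\epsilon(x)\ge 0$, while the sandwich~(\ref{hbounded}) gives $0\le H\le \sup_{0\le s\le\tau}S_s$, so that $H$ inherits whatever integrability the path supremum of $S$ possesses.

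Next I would establish two-sided decay of $\epsilon$. For small $x$, the pointwise bound $(xH-K)^{+}\le xH$ produces $\epsilon(x)\le xC$, a linear estimate at the origin. For $x$ large enough that $xC\ge K$, a direct rearrangement gives
\begin{equation*}
\epsilon(x) = E\bigl[(K-xH)\mathbf{1}_{\{xH<K\}}\bigr] \le K\,P(H<K/x),
\end{equation*}
so the decay of $\epsilon$ at infinity is controlled by the left tail of $H$. Since $H\ge \inf_{0\le s\le\tau}S_s = \exp(\inf_{0\le s\le\tau}Z_s)$, this reduces to a left-tail estimate for $\inf_{s}Z_s$, which under $\BM$ and $\NIG$ (with properties recorded in Section~\ref{appendixsection}) decays faster than any polynomial in $x$.

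Finally I would control the dominating factor $\sup_{t}\frac{d\nu_t}{d\bar\nu}(x)$ via the explicit densities of Section~\ref{alphasection}: in both models one has $\frac{d\nu_t}{d\bar\nu}(x) = x^{-1}\hat p_t(\log x)$ for an explicit modified density $\hat p_t$ of $Z_t$, and standard Gaussian estimates in the $\BM$ case together with the inverse-Gaussian representation in the $\NIG$ case yield an envelope $\sup_{t\ge 0}\frac{d\nu_t}{d\bar\nu}(x)\lesssim x^{\gamma-1}$, where $\gamma$ is the linear-exponential exponent appearing in the limiting density $\frac{d\nu}{d\bar\nu}$. The integral of Assumption~\ref{integ} then splits as $\int_0^1\epsilon(x)\,x^{\gamma-1}\,dx$ (finite via the linear bound provided $\gamma>-1$, which holds automatically in $\BM$ and, in $\NIG$, is part of what the restriction $\frac{2b}{\delta+\sqrt{b^2+\delta^2}}<1$ secures) plus $\int_1^\infty\epsilon(x)\,x^{\gamma-1}\,dx$ (finite via the super-polynomial decay of $P(H<K/x)$ dominating the polynomial factor under the same restriction). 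The hard step will be producing a genuinely $t$-independent envelope for $\frac{d\nu_t}{d\bar\nu}$: pointwise convergence as $t\to\infty$ is granted by Assumption~\ref{conv}, but uniformity at small $t$ -- where $P(S_t\in\cdot)/\alpha(t)$ need not yet resemble its limit -- is exactly what forces the detailed kernel estimates collected in Section~\ref{appendixsection}, and it is there that the $\NIG$ parameter restriction does the decisive work.
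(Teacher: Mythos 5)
Your reduction of $\epsilon$ via the scaling property to $\epsilon(x)=E[(xH-K)^{+}]-(xC-K)^{+}$ and your large-$x$ bound $\epsilon(x)\le K\,P(H<K/x)$ do match the paper's treatment of the region $z=\log x>\log(K/C)$. But there are two genuine gaps, both at the quantitative heart of the theorem. First, your estimate near the origin, $\epsilon(x)\le xC$, i.e.\ $\hat\epsilon(z)\le Ce^{z}$, is too weak. Against the envelope $\sup_t\frac{d\hat\nu_t}{d\bar\nu}(z)\le C_{\BM}e^{\mu z/\sigma^2}$ of (\ref{BMsup}) it gives an integrand of order $e^{(1+\mu/\sigma^2)z}$ as $z\to-\infty$, which is integrable only when $\mu>-\sigma^2$; your claim that ``$\gamma>-1$ holds automatically in $\BM$'' is false, since $\mu/\sigma^2$ may be any real number and the theorem imposes no such restriction in the $\BM$ case. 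The paper instead bounds $\hat\epsilon(z)$ for $z\le\log(K/C)$ by $E[(\sup_{s}e^{z+Z^0_s}-K)^{+}]$, applies H\"older's inequality to peel off $P(\sup_{s}Z^0_s>\log K-z)^{1/q}$, and converts this to $P(Z^0_\tau>\log K-z)^{1/q}$ via a reflection-principle lemma; by (\ref{BMbeta-}) this decays like $e^{-cz^2}$ and so dominates \emph{every} exponential envelope. This H\"older--reflection step (Lemma~\ref{taillem}) is the missing idea, and it is also where the moment hypothesis $E[e^{\frac{1}{\varepsilon}\sup Z^0_s}]<\infty$ enters.

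Second, your assertion that the left tail of $\inf_s Z_s$ ``decays faster than any polynomial in $x$'' is true for $\BM$ but false for $\NIG$: by (\ref{NIGbeta+}) the relevant tail is $O(|z|^{-3/2}e^{-(\theta+\sqrt{\theta^2+\mu^2})z})$, i.e.\ only a fixed \emph{power} of $x=e^{z}$. Hence convergence of the integral over $x>1$ is not automatic but is a competition between this power and the exponential factor $e^{(\theta+2b\sqrt{\theta^2+\mu^2}/(\delta+\sqrt{b^2+\delta^2}))z}$ from (\ref{NIGsup}); the condition $\frac{2b}{\delta+\sqrt{b^2+\delta^2}}<1$ is precisely what makes the tail win (up to the H\"older loss $\rho<1$). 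As written, your argument would yield the $\NIG$ case with no parameter restriction, contradicting the statement. You also implicitly need tail control of the running extremes $\sup_s Z_s$ and $\inf_s Z_s$ rather than of $Z_\tau$ alone; the paper supplies this through the reflection-principle lemma, which your outline omits.
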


The second class includes a barrier option. 
It also includes the Asian options with another $\widetilde g$.
\begin{thm}\label{payoffthm2}
	Let $(g,\widetilde g)$ satisfy there exist $M_U$, $M_L$ and $M$ such that
	\begin{eqnarray}
		\inf_{0\le s \le\tau} w(s) > M_{L} 
		\Rightarrow g(w) = \widetilde g(w), \label{payoffthm2a1} \\
		\sup_{0\le s \le\tau} w(s) < M_{U}
		\Rightarrow g(w) = \widetilde g(w), \label{payoffthm2a2} \\
		|g(w) - \widetilde g(w)| \le \sup_{0 \le s \le\tau} w(s) + M. \label{payoffthm2a3} 
	\end{eqnarray}
	Then \ref{integ} holds
	in cases of $\BM$ and $\NIG$ with the parameter condition,
	$$
		\frac{ 2b }{ \delta + \sqrt{b^2+\delta^2} } < 1.
	$$
\end{thm}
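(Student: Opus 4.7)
The plan is to reduce assumption \ref{integ} to exponential tail estimates on the running maximum $X:=\sup_{0\le s\le\tau}(Z_s-z_0)$ and on the negative of the running minimum $Y:=-\inf_{0\le s\le\tau}(Z_s-z_0)$ of $Z$, and then to verify those estimates by Doob's maximal inequality applied to an exponential martingale of $Z$. The overall structure parallels the argument for Theorem~\ref{payoffthm}; the difference is that here the defect $|g-\widetilde g|$ is controlled by barrier--crossing events rather than by an averaging error.

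Starting from (\ref{payoffthm2a1})--(\ref{payoffthm2a3}), $\epsilon(x)$ vanishes unless the path $(S^x_s)_{0\le s\le \tau}$ exits the strip $(M_L,M_U)$, and on that exit event $|g-\widetilde g|$ is dominated by $\sup_s S^x_s + M$. Writing $S^x_s=x e^{Z_s-z_0}$ with $U=e^X$, and substituting $x=e^z$ through the lemma in Section~\ref{alphasection}, the integral in \ref{integ} is dominated by
$$\int_{\mathbb{R}} \sup_t \frac{d\hat\nu_t}{d\bar\nu}(z)\, E\bigl[(e^z U + M)(\mathbf{1}_{Y \ge z - \log M_L} + \mathbf{1}_{X \ge \log M_U - z})\bigr]\,dz.$$
A pointwise computation from the densities in Section~\ref{alphasection} (with a short Bessel--asymptotic estimate deferred to the appendix in the $\NIG$ case) shows that $\sup_t \frac{d\hat\nu_t}{d\bar\nu}(z)$ is dominated by a constant multiple of its pointwise limit $e^{\gamma(z-z_0)}$, so it suffices to check integrability against that exponential for the appropriate constant $\gamma$.

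Next, I would split the $z$--integral at $0$: on $z\gg 0$ only $\mathbf{1}_{Y\ge z-\log M_L}$ matters, while on $z\ll 0$ only $\mathbf{1}_{X\ge \log M_U - z}$ matters. Doob's maximal inequality applied to the exponential martingales $\exp(\pm\lambda(Z_s-z_0)-\psi(\pm\lambda)s)$ gives $P(X\ge L),P(Y\ge L)\le C_\lambda e^{-\lambda L}$ for every $\lambda$ inside the domain on which the Laplace exponent $\psi$ of $Z$ is finite; the mixed factor $e^z U\mathbf{1}_{Y\ge L}$ is controlled by a H\"older inequality using an exponential moment of $U$ that follows from the same martingale. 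After these bounds the integrand decays like $e^{(\gamma+1-\eta)z}$ at $+\infty$ and like $e^{(\gamma+\eta')z}$ at $-\infty$, so integrability follows as soon as one can choose admissible exponents $\eta,\eta'$ lying inside the domain of $\psi$ while dominating $\gamma+1$ and $-\gamma$ respectively.

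The main obstacle is this last step in the $\NIG$ case, whose Laplace exponent is finite only on a bounded interval. I would compute $\psi$ and the explicit tails of $X$ and $Y$ from the subordinator representation $Z-z_0=W_{\mathrm{IG}(\cdot)}+\theta\,\mathrm{IG}(\cdot)+b(\cdot)$ and verify algebraically that the hypothesis $2b/(\delta+\sqrt{b^2+\delta^2})<1$ is precisely what leaves enough room between $\gamma$ and the endpoints of the domain of $\psi$ to accommodate admissible $\eta$ and $\eta'$. For $\BM$ the corresponding verification is immediate since $\psi$ is finite on all of $\mathbb{R}$.
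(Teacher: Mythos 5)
There is a genuine gap in how you exploit conditions (\ref{payoffthm2a1}) and (\ref{payoffthm2a2}). Together they say that $g(w)\ne\widetilde g(w)$ only when \emph{both} $\inf_s w(s)\le M_L$ \emph{and} $\sup_s w(s)\ge M_U$ hold, i.e.\ the path must cross the two barriers simultaneously; this is an intersection of events, not the union ``the path exits the strip.'' Your displayed bound replaces the indicator of this intersection by the sum $\mathbf{1}_{Y\ge z-\log M_L}+\mathbf{1}_{X\ge \log M_U-z}$, which is a union bound and is fatally lossy: as $z\to+\infty$ the event $\{X\ge \log M_U-z\}$ has probability $1$ (since $X=\sup_s(Z_s-z_0)\ge 0$), so that term contributes roughly $e^{\gamma z}\,e^{z}E[U]$ to the integrand, which is not integrable at $+\infty$ unless $\gamma<-1$ --- a condition satisfied by neither $\BM$ nor $\NIG$ in general. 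A related defect is the retained factor $e^zU$ in your bound on $\sup_s S^x_s$: without invoking the downcrossing event you cannot remove the $e^z$, and for $\NIG$ that extra exponential shifts the integrability condition away from the stated one $2b/(\delta+\sqrt{b^2+\delta^2})<1$.

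The paper's Lemma~\ref{taillem2} fixes both problems at once. On the intersection event one has $\inf_s e^{Z^z_s}<M_L$, so the identity $\sup_s e^{Z^z_s}=\inf_s e^{Z^z_s}\cdot e^{\sup Z^0_s-\inf Z^0_s}\le M_L\,e^{\sup Z^0_s-\inf Z^0_s}$ bounds the payoff defect by a quantity independent of $z$; H\"older's inequality together with the reflection-principle lemma then yield $|\hat\epsilon(z)|=O\bigl(\min\{P(Z^0_\tau>\log M_U-z),\,P(Z^0_\tau<\log M_L-z)\}^{1-\varepsilon}\bigr)$, and one uses the downcrossing tail as $z\to+\infty$ and the upcrossing tail as $z\to-\infty$ before combining with (\ref{BMsup})--(\ref{NIGbeta-}) exactly as in the proof of Theorem~\ref{payoffthm}. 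Your idea of obtaining the tails of the running extrema from Doob's inequality applied to exponential martingales is a workable substitute for the paper's reflection-principle lemma, and your plan to check the $\NIG$ parameter condition algebraically is in the right spirit; but as written, the union bound and the uncontrolled $e^z$ factor break the integrability that \ref{integ} requires.
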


We note that the classes given by
Theorem~\ref{payoffthm} and Theorem~\ref{payoffthm2}
are just examples of class satisfying \ref{integ}
and that there may exist many other classes.
Especially, for given payoff $g$, the $\widetilde g$ is not unique. 

\subsection{Discrete Asian option}
The payoff for a discrete Asian option whose strike price is $K$ 
and whose maturity time is $T$ is defined by
$$
\left( \frac{1}{N}\sum_{i=1}^{N}S_{T-\tau_i} - K \right)^{+}
$$
where $0 \leq \tau_1 \leq \cdots\leq \tau_N = \tau$. 
In this case, let $h$ be defined by
\begin{eqnarray*}
	h(w) := \frac{1}{N}\sum_{i=1}^{N} w(\tau - \tau_i).
\end{eqnarray*}
We can then express the option as
\begin{eqnarray*}
	g((S_s)_{T-\tau\le s \le T}) 
	&=&
	\left( 
		h((S_s)_{T-\tau\le s \le T}) - K
	\right)^{+} \\
	&=&
	\left( \frac{1}{N}\sum_{i=1}^{N}S_{T-\tau_i} - K \right)^{+}.
\end{eqnarray*} 
Then it is easy to see that
$h$ satisfies the conditions of Theorem~\ref{payoffthm}.

To apply Theorem~\ref{payoffthm2},
let $(g,\widetilde g)$ be defined by
\begin{eqnarray*}
        g(w) &:=& \left( \frac{1}{N}\sum_{i=1}^{N} w(\tau - \tau_i) - K \right)^{+},\\
	\widetilde g(w) &:=&  \frac{1}{N}\sum_{i=1}^{N} 
        \left( w(\tau - \tau_i) - K_i \right)^{+}
\end{eqnarray*} 
where $\frac{1}{N}\sum_{i=1}^{N} K_i = K$.
Then $(g,\widetilde g)$ satisfys the conditions of Theorem~\ref{payoffthm2}
with respect to $M_U:= \min_i K_i$ and $M_L:= \max_i K_i$.

\subsection{Integral Asian Option}
Similar to the payoff for a discrete Asian option, 
we define that for an Asian option as 
$$
	\left( \frac{1}{\tau}\int_{T-\tau}^{T} S_s \ ds  - K \right)^{+}.
$$
In this case, let $h$ be defined by
\begin{eqnarray*}
	h(w) := \frac{1}{\tau}\int_{0}^{\tau} w(s) ds.
\end{eqnarray*} 
Then $h$ satisfies the conditions of Theorem~\ref{payoffthm}.

To apply Theorem~\ref{payoffthm2},
let $(g,\widetilde g)$ be defined by
\begin{eqnarray*}
        g(w) &:=& 
        \left( \frac{1}{\tau}\int_{0}^{\tau} w( s )\, ds - K \right)^{+} ,\\
	\widetilde g(w) &:=& 
        \frac{1}{\tau}\int_{0}^{\tau} \left( w( s ) - K_s \right)^{+}\,ds
\end{eqnarray*} 
where $\frac{1}{\tau}\int_{0}^{\tau}K_s \,ds = K$.
Then $(g,\widetilde g)$ satisfys the conditions of Theorem~\ref{payoffthm2}
with respect to $M_U:= \inf_s K_s$ and $M_L:= \sup_s K_s$.

\subsection{Discrete Lookback Option}
The payoff for a discrete Lookback option whose strike price is $K$ 
and whose maturity time is $T$ is defined by 
$$ 
	\left( \max_{i=1, \ldots, N} S_{T-\tau_i} - K \right)^{+}
$$
where $0 \leq \tau_1 \leq \cdots\leq \tau_N = \tau$. 
In this case, let $h$ defined by
\begin{eqnarray*}
	h(w) := \max_{i=1, \ldots ,N} w(\tau_i).
\end{eqnarray*}
Then $h$ satisfies the conditions of Theorem~\ref{payoffthm}.

\subsection{Partial Barrier Option}
The payoff for a knock-in partial barrier option 
with the strike price $K$,
the barrier level $L$,
the maturity time $T$, 
and monitoring period $[T-\tau,T]$
is defined by
\begin{eqnarray*}
	( S_T  - K )^{+} 1_{ \{ \displaystyle \inf_{T-\tau \le t \le T} S_t \ge L \} }
\end{eqnarray*} 
where $L>K$.
Let $(g,\widetilde g)$ be defined by
\begin{eqnarray*}
	g(w)            &:=&  ( w(\tau) - K )^{+} 1_{ \{ \inf w \ge L \} },\\
	\widetilde g(w) &:=&  ( w(\tau) - L )^{+}.
\end{eqnarray*}
Then $(g,\widetilde g)$ satisfies the conditions of Theorem~\ref{payoffthm2}
with respect to $M_U := K$ and $M_L := L$.
\section{Proof of Theorem~\ref{payoffthm} and Theorem~\ref{payoffthm2}} 
\label{proofsection}
To prove theorems,
we use Lebesgue integrability.
By substituting $e^z=x$,
the condition \ref{integ} is replaced by
a condition about $Z$;
\begin{lem}
	\begin{eqnarray}\label{integ_mod}
		 \int_{-\infty}^{\infty} |\hat \epsilon(z)| \sup_{t} \frac{d\hat \nu_t}{d\bar{\nu}}(z) 
		\, \bar{\nu}(dz) < \infty
	\end{eqnarray}
	implies \ref{integ}
	for $\hat \epsilon(z) := \epsilon(e^z)$.
\end{lem}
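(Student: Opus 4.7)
The plan is a one-line change of variables. Since $S_t=e^{Z_t}$ and $S$ takes values in $(0,\infty)$, the map $z\mapsto e^z$ is a $C^{1}$ bijection of $\mathbb{R}$ onto $(0,\infty)$ with Jacobian $e^z$, so the integral appearing in \ref{integ} can be transformed into the integral appearing in (\ref{integ_mod}) by substituting $x=e^z$.

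First I would recall the pointwise density identity already derived in Section~\ref{alphasection}, namely
\begin{equation*}
	\frac{d\nu_t}{d\bar\nu}(x) \;=\; \frac{1}{x}\,\frac{d\hat\nu_t}{d\bar\nu}(\log x),
\end{equation*}
which follows by differentiating $P(S_t\le x)/\alpha(t)=P(Z_t\le \log x)/\alpha(t)$ with respect to $x$. Since the prefactor $1/x$ does not depend on $t$, it commutes with the supremum over $t$, giving
\begin{equation*}
	\sup_{t}\frac{d\nu_t}{d\bar\nu}(x) \;=\; \frac{1}{x}\,\sup_{t}\frac{d\hat\nu_t}{d\bar\nu}(\log x).
\end{equation*}

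Next I would substitute $x=e^{z}$, $\bar\nu(dx)=e^{z}\,\bar\nu(dz)$, in the integral defining \ref{integ}. By the very definition $\hat\epsilon(z)=\epsilon(e^{z})$, and the factor $1/x=e^{-z}$ cancels exactly against the Jacobian $e^{z}$, so
\begin{equation*}
	\int_{0}^{\infty} |\epsilon(x)|\,\sup_{t}\frac{d\nu_t}{d\bar\nu}(x)\,\bar\nu(dx)
	\;=\; \int_{-\infty}^{\infty} |\hat\epsilon(z)|\,\sup_{t}\frac{d\hat\nu_t}{d\bar\nu}(z)\,\bar\nu(dz).
\end{equation*}
Therefore the finiteness of the right-hand side, which is precisely (\ref{integ_mod}), forces the finiteness of the left-hand side, which is \ref{integ}.

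The argument is essentially a routine change-of-variables manipulation and contains no real obstacle. The only technical point worth verifying is that $\bar\nu$ is read as Lebesgue measure on $(0,\infty)$ on the $x$ side and on $\mathbb{R}$ on the $z$ side; this is consistent with the convention fixed at the start of Section~\ref{alphasection}, and ensures the Jacobian manipulation above is legitimate.
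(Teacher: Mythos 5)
Your proof is correct and follows exactly the route the paper intends: the paper itself only remarks ``by substituting $e^z=x$'' and relies on the density identity $\frac{d\hat\nu_t}{d\bar\nu}(z)=e^z\frac{d\nu_t}{d\bar\nu}(e^z)$ derived in Section~\ref{alphasection}, which is precisely the change of variables you carry out, with the additional (correct) observation that the $t$-independent factor $1/x$ passes through the supremum. Nothing is missing.
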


To see (\ref{integ_mod}), 
it is enough to demonstrate that the integrand vanishes fast 
when $|z| \to \infty$.
At first, we start with descriptions of the bound of 
$\sup_{t} \frac{d\hat\nu_t}{d\bar\nu}(x)$. 
Note that this density function, $\sup_{t}  \frac{d\nu_t}{d\bar\nu}(x)$, 
depends only on the process $Z$ but not payoff and its approximation function, 
$(g,\widetilde g)$;
\begin{eqnarray}
	\BM && \sup_{t}  \frac{d\hat\nu_t}{d\bar\nu}(z) 
	\le C_{\BM}
		\exp\left( \frac{ \mu }{\sigma^2}z \right),
		\label{BMsup}\\ 
	\NIG && \sup_{t}  \frac{d\hat\nu_t}{d\bar\nu}(z) 
		\le C_{\NIG}
		\exp\left( 
			\left( 
				\theta + b
				\frac{ 2\sqrt{ \theta^2 + \mu^2 } }{ \delta + \sqrt{b^2+\delta^2} }
			\right)
			z
		\right)
		\label{NIGsup} 
\end{eqnarray}
where $C_{\BM}$ and  $C_{\NIG}$ are constants.

We next give a lemma about the order for 
$$
	\hat \epsilon(z)
	=
	E[g( (e^{Z_s^z})_{ 0 \le s \le \tau } ) 
		-\widetilde g( (e^{Z_s^z})_{ 0 \le s \le \tau } )]
$$ 
to prove Theorem~\ref{payoffthm} 
where $Z_s^z := Z_s - z_0 + z$.
The order is described by the tail order for each process, $Z$. 
\begin{lem} \label{taillem}
	Let $h$ and $(g,\widetilde g)$ satisfy the same conditions as in
	Theorem~\ref{payoffthm}.
	Let $Z_t^0$ be a Levy procces starting with $0$ such that
	$$
		0 < \min\{  \inf_{0 < u\le\tau} P( Z_u^0>0 ),  \inf_{0<u\le\tau} P( Z_u^0<0 ) \}.
	$$
	Then it holds that as $z \to \infty$,
	\begin{eqnarray*}
		|\hat \epsilon(z)| 
		&=&
		O\left(P( Z_\tau^0 < \log K - z ) \right).
	\end{eqnarray*}
	Also, for any $\varepsilon>0$, it holds that as $z \to -\infty$,
	\begin{eqnarray*}
		|\hat \epsilon(z)| 
		&=&
		O\left(P( Z_\tau^0 > \log K - z )^{1-\varepsilon}\right)
	\end{eqnarray*}
	if $E[ e^{ \frac{1}{\varepsilon} \sup Z_s^0 } ] < \infty$.
\end{lem}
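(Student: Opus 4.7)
The plan is to use the homogeneity $h(aw)=ah(w)$ to reduce $\hat\epsilon(z)$ to a one-dimensional expression in $H:=h((e^{Z^0_s})_{0\le s\le\tau})$, then dissect that expression via a put--call parity identity, and finally control the two remaining one-sided terms by comparing path-functional tails to endpoint tails of $Z^0_\tau$ through a strong-Markov passage-time argument. First I would write $Z^z_s=z+Z^0_s$ and pull $e^z$ out of $h$ by scaling, obtaining $g((e^{Z^z_s}))=(e^zH-K)^+$ and $\widetilde g((e^{Z^z_s}))=(Ce^z-K)^+$ with $C=E[H]$. Applying $(x-K)^+=(x-K)+(K-x)^+$ on both sides and cancelling the linear parts via $E[H]=C$ then yields the clean identity
\begin{equation*}
\hat\epsilon(z)=E[(K-e^zH)^+]-(K-Ce^z)^+.
\end{equation*}

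For $z\to+\infty$ the subtracted constant is eventually $0$, so $|\hat\epsilon(z)|\le K\,P(H<Ke^{-z})$; the lower bound $H\ge e^{\inf_s Z^0_s}$ (from $\inf w\le h(w)$) places this event inside $\{\inf_{s\le\tau}Z^0_s<\log K-z\}$. For $z\to-\infty$ the identity collapses to $\hat\epsilon(z)=E[(e^zH-K)^+]$; I would dominate by $e^zH\cdot 1_{\{H>Ke^{-z}\}}$ and apply H\"older with exponents $1/\varepsilon$ and $1/(1-\varepsilon)$ to get
\begin{equation*}
|\hat\epsilon(z)|\le e^z\,E[H^{1/\varepsilon}]^\varepsilon\,P(H>Ke^{-z})^{1-\varepsilon}.
\end{equation*}
The moment is finite because $H\le e^{\sup_s Z^0_s}$ and $E[e^{(1/\varepsilon)\sup Z^0_s}]<\infty$ by hypothesis, the event sits inside $\{\sup_{s\le\tau}Z^0_s>\log K-z\}$, and the prefactor $e^z$ is bounded for $z\le 0$ and so absorbed into the $O$-constant.

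The main obstacle, and the place where the hypothesis $\inf_{0<u\le\tau}\min\{P(Z^0_u>0),P(Z^0_u<0)\}>0$ is used, is replacing path-sup/path-inf tails by the endpoint tail of $Z^0_\tau$. Setting $p_-:=\inf_{0<u\le\tau}P(Z^0_u<0)>0$ and $\sigma:=\inf\{s\le\tau:Z^0_s<-M\}$, the strong Markov property at $\sigma$ gives, on $\{\sigma\le\tau\}$, the overshoot $Z^0_\sigma\le-M$ and conditionally $P(Z^0_\tau\le-M\mid\mathcal F_\sigma)\ge P(Z^0_{\tau-\sigma}\le 0)\ge p_-$, whence $P(\inf_{s\le\tau}Z^0_s<-M)\le p_-^{-1}P(Z^0_\tau\le-M)$; the mirror argument with $p_+:=\inf_{0<u\le\tau}P(Z^0_u>0)$ handles the sup side. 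The delicate point for L\'evy processes is the possible overshoot at $\sigma$, but overshoot only makes $Z^0_\sigma$ more extreme and therefore strengthens rather than weakens the conditional tail estimate, so no further regularity is needed. Feeding these two passage-time bounds into the previous displays yields the claimed orders $O(P(Z^0_\tau<\log K-z))$ as $z\to\infty$ and $O(P(Z^0_\tau>\log K-z)^{1-\varepsilon})$ as $z\to-\infty$.
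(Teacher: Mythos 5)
Your proposal is correct and follows essentially the same route as the paper: the same scaling reduction via $h(aw)=ah(w)$, the same split into the two regimes around $z=\log(K/C)$ (your put--call parity identity is just a unified packaging of the paper's $(x-K)^{+}-(x-K)=(x-K)^{-}$ step), the same H\"older estimate with exponents $1/\varepsilon$ and $1/(1-\varepsilon)$ using $H\le e^{\sup Z^0_s}$, and the same strong-Markov first-passage comparison (the paper's ``reflection principle'' lemma) to replace $\sup/\inf$ tails by the endpoint tail of $Z^0_\tau$. No gaps worth flagging.
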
 
We next give a description of the order  
when $Z$ is either $\BM$ or $\NIG$; 
\begin{eqnarray}
	&\BM& 
		P( Z_t < \log K - z )
		= O\big( \frac{1}{z} e^{-\frac{ (z+|\log K + \mu\tau|)^2 }{2\sigma^2\tau}} \big) 
		\qquad (z\to\infty)
		\label{BMbeta+}\\
	&\BM& 
		P( Z_t > \log K - z )
		= O\big( \frac{1}{z} e^{-\frac{ (z-|\log K + \mu\tau|)^2 }{2\sigma^2\tau}} \big) 
		\qquad (z\to-\infty)
		\label{BMbeta-}\\
	&\NIG& 
		P( Z_t < \log K - z ) 
		= O(
			|z|^{-\frac{3}{2}}
			e^{ - \theta z - \sqrt{ \theta^2 + \mu^2 } |z| }
		)\qquad(z\to\infty) \label{NIGbeta+}\\
	&\NIG& 
		P( Z_t > \log K - z ) 
		= O(
			|z|^{-\frac{3}{2}}
			e^{ - \theta z - \sqrt{ \theta^2 + \mu^2 } |z| }
		)\qquad(z\to-\infty) \label{NIGbeta-}
\end{eqnarray}
We now prove Theorem~\ref{payoffthm}.
\begin{proof}[Proof of Theorem~\ref{payoffthm}]
In the case of $\BM$, we have the following from the results of Lemma~\ref{taillem}, 
(\ref{BMsup}), (\ref{BMbeta+}), and (\ref{BMbeta-}) that 
\begin{eqnarray*}
	|\epsilon(z)| \sup_{t}  \frac{d\nu_t}{d\bar\nu}  (z) 
	&=& O\left(
		e^{-\frac{ z^2 \rho }{2\sigma^2\tau}} 
    \right)
\end{eqnarray*}
for any $\rho<1$. 
This guarantees the integrability of the integrand in (\ref{integ_mod}).

In the case of $\NIG$, we also have the following from the results of 
Lemma~\ref{taillem}, (\ref{NIGsup}), (\ref{NIGbeta+}), and (\ref{NIGbeta-}) that
\begin{eqnarray*}
	|\epsilon(z)| \sup_{t}  \frac{d\nu_t}{d\bar\nu}  (z) 
	&=&  O(
			|z|^{-\frac{3}{2}\rho}
			e^{ 
				\theta(1-\rho) z
				+ b \frac{ 2 \sqrt{ \theta^2 + \mu^2 } }{ \delta + \sqrt{b^2+\delta^2} } z
				- \rho \sqrt{ \theta^2 + \mu^2 } |z| }
		)
\end{eqnarray*}
for any $\rho<1$. If
$$
	\frac{ 2b }{ \delta + \sqrt{b^2+\delta^2} } < 1,
$$
this guarantees the integrability of the integrand in (\ref{integ_mod}).
\end{proof}
\begin{proof}[Proof of Theorem~\ref{payoffthm2}]
Similarly to Theorem~\ref{payoffthm}, 
the theorem is proved by using a following lemma
instead of Lemma~\ref{taillem}.
\end{proof}
\begin{lem} \label{taillem2}
	Let $(g,\widetilde g)$ satisfy the same conditions as in
	Theorem~\ref{payoffthm2}.
	Let $Z_t^0$ be a Levy procces starting with $0$ such that
	$$
		0 < \min\{  \inf_{0 <  u\le\tau} P( Z_u^0>0 ),  \inf_{0 <  u\le\tau} P( Z_u^0<0 ) \}.
	$$
	Then for $\epsilon > 0$, it holds that as $|z| \to \infty$,
	\begin{eqnarray*}
		|\hat \epsilon(z)| 
		&=&
		O\left(	\min\{
			P( Z_\tau^0 > \log M_U - z )^{1-\varepsilon},
			P( Z_\tau^0 < \log M_L - z )^{1-\varepsilon}
		\} \right)
	\end{eqnarray*}
	if $E[ e^{ \frac{1}{\varepsilon} (\sup Z_s^0 - \inf Z_s^0) } ] < \infty$.
\end{lem}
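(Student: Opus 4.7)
The plan is to mimic the proof of Lemma~\ref{taillem} but organized around the two-sided trigger conditions (\ref{payoffthm2a1}) and (\ref{payoffthm2a2}). Set $w^z(s) := e^{Z_s^z}$ and let $A := \{g(w^z) \ne \widetilde g(w^z)\}$. The contrapositives of (\ref{payoffthm2a1}) and (\ref{payoffthm2a2}) force
\[
A \subseteq \bigl\{\inf_{0\le s\le \tau} Z_s^z \le \log M_L\bigr\} \cap \bigl\{\sup_{0\le s\le \tau} Z_s^z \ge \log M_U\bigr\},
\]
so $|\hat\epsilon(z)| \le E[|g(w^z) - \widetilde g(w^z)|\, 1_A]$, and (\ref{payoffthm2a3}) then bounds the integrand by $e^{\sup_s Z_s^z} + M$.

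The next step is to strip out the explicit $z$-dependence of that bound. Using $Z_s^z = z + Z_s^0$ and the inclusion $\inf_s Z_s^z \le \log M_L$ that holds on $A$, I rewrite
\[
\sup_s Z_s^z = \inf_s Z_s^z + (\sup_s Z_s^0 - \inf_s Z_s^0) \le \log M_L + R,
\]
where $R := \sup_s Z_s^0 - \inf_s Z_s^0$ is independent of $z$; hence $e^{\sup_s Z_s^z} \le M_L e^R$ on $A$. H\"older's inequality with conjugate exponents $1/\varepsilon$ and $1/(1-\varepsilon)$ then gives
\[
E[e^R\, 1_A] \le \bigl(E[e^{R/\varepsilon}]\bigr)^{\varepsilon} P(A)^{1-\varepsilon},
\]
whose leading factor is finite by the standing hypothesis. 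Combining the pieces,
\[
|\hat\epsilon(z)| \le M_L \bigl(E[e^{R/\varepsilon}]\bigr)^{\varepsilon} P(A)^{1-\varepsilon} + M\, P(A) = O\bigl(P(A)^{1-\varepsilon}\bigr)
\]
since $P(A)\to 0$ as $|z|\to\infty$.

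It remains to replace the running-extremum probabilities by the endpoint tails $P(Z_\tau^0 < \log M_L - z)$ and $P(Z_\tau^0 > \log M_U - z)$. Fixing $y := \log M_L - z$ and letting $T := \inf\{s>0 : Z_s^0 \le y\}$, I invoke the strong Markov property together with the stationary and independent increments of $Z^0$:
\[
P(Z_\tau^0 \le y) \ge P(T\le\tau,\ Z_\tau^0 - Z_T^0 \le 0) \ge P(T\le\tau)\cdot \inf_{0<u\le\tau} P(Z_u^0 \le 0).
\]
The positivity hypothesis $\inf_u P(Z_u^0 < 0) > 0$ thus converts the inf-tail $P(\inf_s Z_s^z \le \log M_L)=P(T\le\tau)$ into the endpoint tail $P(Z_\tau^0 < y)$ up to a universal constant. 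The symmetric argument at the first upcrossing time of $\log M_U - z$, using $\inf_u P(Z_u^0 > 0) > 0$, supplies the matching sup-side bound $P(\sup_s Z_s^z \ge \log M_U) \le C\, P(Z_\tau^0 > \log M_U - z)$. Taking the minimum of the two bounds and raising to the power $1-\varepsilon$ delivers the stated asymptotics.

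The principal technical subtlety is this strong-Markov step: for jump processes such as \NIG, the first-passage position $Z_T^0$ may land strictly below $y$, but this downward overshoot only strengthens the inclusion $\{Z_\tau^0 - Z_T^0 \le 0\}\subseteq\{Z_\tau^0 \le y\}$, so the inequality goes through verbatim. A minor bookkeeping item is that the additive $M\, P(A)$ term is dominated by $P(A)^{1-\varepsilon}$ in the regime $|z|\to\infty$, so it disappears into the $O(\cdot)$.
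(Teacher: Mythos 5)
Your proof is correct and follows essentially the same route as the paper's: restrict to the event where both trigger conditions fail, bound $\sup e^{Z_s^z}$ by $M_L e^{\sup Z_s^0-\inf Z_s^0}$ using the inf-side constraint, apply H\"older with exponents $1/\varepsilon$ and $1/(1-\varepsilon)$, and convert the running-extremum tails to endpoint tails via the strong Markov property. The only cosmetic differences are that the paper isolates the last step as a separate ``reflection principle'' lemma (which you re-derive inline) and applies H\"older to the combined integrand $(M_Le^{R}+M)1_A$ rather than term by term.
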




\subsection{Proof of lemmas}
Note that Lemma~\ref{taillem} and  Lemma~\ref{taillem2} do not require that 
$Z$ be exact for either the $\BM$ or the $\NIG$ case, 
but that
$Z$ has a ``reflection principle'';
\begin{lem} 
	Let $Z_t^0$ be a Levy procces starting with $0$.
	Then for any $a>0$,
	\begin{eqnarray*}
		P(Z_\tau^0>a) &\ge& 
		    D P( \sup_{0\le s\le\tau} Z_s^0>a ) \quad \mbox{and}\\
		P(Z_\tau^0< -a) &\ge& 
			D P( \inf_{0\le s\le\tau} Z_s^0< -a ).
	\end{eqnarray*}
	where
	$$
		D := \min\{  \inf_{0 < u\le\tau} P( Z_u^0>0 ),  \inf_{0< u\le\tau} P( Z_u^0<0 ) \}.
	$$
\end{lem}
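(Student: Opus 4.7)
The plan is to prove the upward-passage inequality; the downward one then follows by applying the first to the Levy process $-Z^0$, which is again Levy starting at $0$, while $\sup(-Z^0)=-\inf Z^0$ and the constant $D$ is manifestly invariant under $Z \mapsto -Z$ (its defining minimum is symmetric in $\{>0\}$ and $\{<0\}$), so the same $D$ reappears.

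First I would introduce the first-passage stopping time $\sigma := \inf\{s \ge 0 : Z_s^0 > a\}$. By right-continuity of the cadlag paths and the definition of the infimum, $Z_\sigma^0 \ge a$ on $\{\sigma < \infty\}$. Next I would identify $\{\sup_{0 \le s \le \tau} Z_s^0 > a\}$ with $\{\sigma < \tau\}$ up to a null set: the inclusion $\{\sigma < \tau\} \subseteq \{\sup > a\}$ is immediate from the definition of $\sigma$, while the remaining difference $\{\sup > a,\,\sigma = \tau\}$ would force $Z^0$ to jump from $\le a$ to a value $>a$ exactly at the deterministic time $\tau$, an event of probability zero since $P(\Delta Z_\tau^0 \ne 0) = 0$ for any Levy process at any fixed $\tau$.

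With that in place, I would apply the strong Markov property at $\sigma$: on $\{\sigma < \tau\}$, the increment $Z_\tau^0 - Z_\sigma^0$ is, conditional on $\mathcal{F}_\sigma$, distributed as an independent copy of the Levy process evaluated at the $\mathcal{F}_\sigma$-measurable deterministic time $\tau - \sigma$. Since $Z_\sigma^0 \ge a$, the inclusion $\{Z_\tau^0 - Z_\sigma^0 > 0\} \cap \{\sigma < \tau\} \subseteq \{Z_\tau^0 > a\}$ holds. Using $\tau - \sigma \in (0,\tau)$ on $\{\sigma < \tau\}$ and the definition of $D$ as the infimum of $P(Z_u^0 > 0)$ over $u \in (0,\tau]$, one then obtains
$$
P(Z_\tau^0 > a) \;\ge\; E\bigl[\,1_{\sigma < \tau}\,P(Z_{\tau-\sigma}^0 > 0 \mid \mathcal{F}_\sigma)\,\bigr] \;\ge\; D\cdot P(\sigma < \tau) \;=\; D\cdot P\!\left(\sup_{0 \le s \le \tau} Z_s^0 > a\right),
$$
which is the desired inequality.

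The main obstacle I anticipate is the null-set identification in the second paragraph: it is intuitively obvious but relies specifically on the Levy property that jump times have no atom at any deterministic instant, rather than on any more robust sample-path property. Once this technical point is settled, the strong Markov conditioning and the transfer to $-Z^0$ for the infimum inequality are essentially routine.
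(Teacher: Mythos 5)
Your proof is correct and follows essentially the same route as the paper's: the first-passage time of $(a,\infty)$, right-continuity to get $Z^0_\sigma\ge a$, the strong Markov property to insert the factor $\inf_{0<u\le\tau}P(Z^0_u>0)$, and symmetry ($Z\mapsto -Z$) for the infimum inequality. Your extra care with the null-set identification of $\{\sup_{0\le s\le\tau}Z^0_s>a\}$ and $\{\sigma<\tau\}$ via the absence of a fixed-time jump is a technicality the paper glosses over (it works with $\{\tau_a\le\tau\}$ without comment), but it does not change the argument.
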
 
\begin{proof}
	Let $\tau_{a}$ be the hitting time of $(a,\infty)$; i.e.,
	$ \tau_{a} := \inf \{ s \mid Z_s^0 \in (a,\infty) \}$.
	Since the path of $Z^0$ is right continous, 
        $Z_{\tau_{a}}^{0} \ge a$ if $\tau_{a} < \infty$.
	Therefore, the strong Markov property implies
	\begin{eqnarray*}
		P( Z_{\tau}^{0} > a )
		&=&   P( Z_{\tau}^{0} > a, \tau_{a} \le \tau ) \\
		&\ge& P( Z_{\tau}^{0} > Z_{\tau_{a}}^{0}, \tau_{a} \le \tau ) \\
		&=& P( Z_{\tau}^{0} - Z_{\tau_{a}}^{0} > 0, \tau_{a} \le \tau ) \\
		&=&  \int_{0}^{\tau} P( Z_{\tau-s}^{0} > 0 ) P( \tau_{a}=\,ds )\\
		&\ge& \inf_{0< u\le\tau} P( Z_u^0>0 ) \int_{0}^{\tau} P( \tau_{a}=\,ds )\\
		&=& \inf_{0< u\le\tau} P( Z_u^0>0 ) P( \tau_{a} \le \tau ) \\
		&=& \inf_{0< u\le\tau} P( Z_u^0>0 ) P( \sup_{0\le s\le\tau} Z_{s}^{0} > a ).
	\end{eqnarray*}
The proof of the other case is similar.
\end{proof}

We now prove Lemma~\ref{taillem} and Lemma~\ref{taillem2}:
\begin{proof}[Proof of Lemma~\ref{taillem}]
From the definition of $\hat\epsilon(z)$ and the assumption of $(g,\widetilde g)$,
\begin{eqnarray*}
	\hat\epsilon(z) 
	&=&
	E[g( (e^{Z_s^z})_{ 0 \le s \le \tau } ) 
		-\widetilde g( ( e^{Z_s^z} )_{ 0 \le s \le \tau } )] \\
	&=& E[ 
		(h((e^{Z_s^z})_{0 \le s \le\tau})-K)^{+} - C(e^{Z_{0}^z}-K/C)^{+}
	] \\
	&=& E[ (h( ( e^{z+Z_s^0} )_{0\le s \le\tau})-K)^{+}] - C(e^{z}-K/C)^{+}.
\end{eqnarray*}
When $z \le \log\frac{K}{C}$, we have the following inequality 
from (\ref{hbounded}), the H\"older inequality and the reflection principle that  
\begin{eqnarray*}
	\hat\epsilon(z) 
	&=& E[ (h((e^{z+Z_s^0} )_{0\le s \le\tau})-K)^{+}] \\
	&\le&  E[ (\sup_{0\le s \le\tau} e^{z+Z_s^0} -K)^{+}] \\
 	&\le& E[ ( e^z\exp(\sup_{0\le s \le\tau} Z_s^0) - K)^{p} ]^{1/p} 
        P( \sup_{0\le s \le\tau} Z_s^0 > \log K - z)^{1/q} \\
 	&\le& E[ ( K/C\exp(\sup_{0\le s \le\tau} Z_s^0) - K)^{p} ]^{1/p} 
        P( \sup_{0\le s \le\tau} Z_s^0 > \log K - z)^{1/q} \\
	&\le& D^{-1/q} E[ (K/C\exp(\sup_{0\le s \le\tau} Z_s^0) - K)
        ^{p} ]^{1/p} P(Z_\tau^0 > \log K - z)^{1/q},
\end{eqnarray*}
where $1 < q < \infty$ with $\frac{1}{q} + \frac{1}{p} =1$. 
This yields the result
$$
	\hat\epsilon(z)= O\left( P(Z_\tau^0 > \log K - z )^{1/q} \right)
$$
as $z \to -\infty$.

On the other hand, when  $z > \log\frac{K}{C}$,
(\ref{schalar}), (\ref{hbounded}) and the reflection principle imply that
\begin{eqnarray*}
	\hat\epsilon(z) 
		&=& E[ ( h((e^{Z_s^z})_{0\le s \le\tau})-K)^{+}] - (Ce^{z} - K) \\
		&=& E[ ( h((e^{z+Z_s^0})_{0\le s \le\tau})-K)^{+}] - 
                (E[ h((e^{Z_s^0})_{0\le s \le\tau} )]e^{z}-K)\\
        &=& E[ (e^zh((e^{Z_s^0})_{0\le s \le\tau})-K)^{-}] \\
	    &\le& E[ ( e^z \inf_{0\le s \le\tau} e^{Z_s^0} - K)^{-}] \\
	    &\le& K P( \inf_{0\le s \le\tau} Z_s^0 < \log K -z ) \\
	    &\le& \frac{K}{D} P(Z_\tau^0 < \log K -z ).
\end{eqnarray*}
This yields the result $\hat\epsilon(z)= O(P(Z_\tau^0 < \log K - z ))$ as $z \to \infty$. 
This completes the proof.
\end{proof}

\begin{proof}[Proof of Lemma~\ref{taillem2}]
Note that
\begin{eqnarray*}
	\sup e^{Z_s^z} 
	= e^z \sup e^{Z_s^0} 
	= e^z \inf e^{Z_s^0} \frac{\sup e^{Z_s^0}}{\inf e^{Z_s^0}}
	= \inf e^{Z_s^z} e^{ \sup Z_s^0 - \inf Z_s^0 }.
\end{eqnarray*}
Then from the definition of $\hat\epsilon(z)$,
(\ref{payoffthm2a1}),
(\ref{payoffthm2a2}) and
(\ref{payoffthm2a3}),
we have
\begin{eqnarray*}
	\hat\epsilon(z) 
	&=& 
	E[ 
		g( (e^{Z_s^z})_{0 \le s \le\tau})  - \widetilde g( (e^{Z_s^z})_{0 \le s \le\tau}) 
	] \\
	&=& 
	E[ 
		\left( g( (e^{Z_s^z})_{0 \le s \le\tau})  - \widetilde g( (e^{Z_s^z})_{0 \le s \le\tau})  \right)
		1_{ \{ \sup e^{Z_s^z} > M_{U}, M_{L} > \inf e^{Z_s^z} \} }
	] \\
	&\le&	E[ 
		\left( \sup e^{Z_s^z} + M \right)
		1_{ \{ \sup e^{Z_s^z} > M_{U}, M_{L} > \inf e^{Z_s^z} \} }
	] \\
	&\le& E[ 
		\left( M_L e^{ \sup Z_s^0 - \inf Z_s^0 } + M \right)
		1_{ \{ \sup e^{Z_s^z} > M_{U}, M_{L} > \inf e^{Z_s^z} \} }
	].
\end{eqnarray*}
Therefore, the H\"older inequality and the reflection principle imply
\begin{eqnarray*}
	\hat\epsilon(z) 
	&\le& ( M_L E[ e^{ p(\sup Z_s^0 - \inf Z_s^0) } ]^{\frac{1}{p}} + M)	
	P( \sup e^{Z_s^z} > M_{U}, M_{L} > \inf e^{Z_s^z}  )^{\frac{1}{q}} \\
	&=& (  M_L E[ e^{ p(\sup Z_s^0 - \inf Z_s^0) } ]^{\frac{1}{p}} + M)
	P( \sup Z_s^0 > \log M_{U} - z , \log M_{L} - z  > \inf Z_s^0  )^{\frac{1}{q}} \\
	&\le&  ( M_L E[ e^{ p(\sup Z_s^0 - \inf Z_s^0) } ]^{\frac{1}{p}} + M)
	D^{-\frac{1}{q}}
	\min\{ 
		P( Z_\tau^0 > \log M_{U} - z ),
		P( \log M_{L} - z  > Z_\tau^0  )
	\}^{\frac{1}{q}}. 
\end{eqnarray*}
\end{proof}
\section{Appendix} \label{appendixsection}
In this section,
we present some properties of $\BM$ and $\NIG$; 
$\alpha$ and $\hat\nu$ of each process, 
(\ref{BMsup}),
(\ref{NIGsup}),
(\ref{BMbeta+}),
(\ref{BMbeta-}),
(\ref{NIGbeta+}) and
(\ref{NIGbeta-}).
\subsection{Brownian Motion}
Recall that $Z_t^z = z + \sigma W_t + \mu t$ and 
$Z_t = Z_t^{z_0}$. Then the density function of $Z_t^z$ 
with respect to the Lebesgue measure is
$$
  y \mapsto \frac{1}{\sqrt{2\pi\sigma^2t}}e^{-\frac{(y-\mu t-z)^2}{2\sigma^2 t}}.$$

By setting $\alpha:[0,\infty)\to[0,\infty)$ 
as $\alpha(t) = t^{-\frac12} e^{-\frac{\mu^2 t}{2 \sigma^2}}$,
we have 
$$
  \frac{d\hat\nu_t}{d\bar\nu}(z) 
  = \frac{1}{\sqrt{2\pi\sigma^2}}
	e^{-\frac{(z-z_0)^2}{2\sigma^2 t} + \frac{\mu (z-z_0)}{\sigma^2}}.
$$
From the fact that $0 < e^{-\frac{(z-z_0)^2}{2\sigma^2 t}} \le 1$ for all $t$, 
we have
$$
  \sup_t \frac{d\hat\nu_t}{d\bar\nu}(z) \le
	\frac{1}{\sqrt{2\pi\sigma^2}}
	e^{\frac{\mu (z-z_0)}{\sigma^2}}.
$$
This inequality shows (\ref{BMsup}).

For $y>0$, we have
\begin{eqnarray*}
  P( |Z_\tau^0| > y ) 
  = \Phi( \frac{y-\mu\tau}{\sigma\sqrt{\tau}} )
    + \Phi( \frac{y+\mu\tau}{\sigma\sqrt{\tau}} ),
\end{eqnarray*}
where
\begin{eqnarray*}
	\Phi( \xi ) = \int_{\xi}^{\infty} \frac{1}{\sqrt{2\pi}} e^{-\frac{x^2}{2}} \,dx.
\end{eqnarray*}

It holds from L'Hopital's theorem 
that $1 - \Phi(\xi) = \Phi(\xi) \sim  \frac{1}{\sqrt{2\pi}\xi} e^{-\frac{\xi^2}{2}}$. This shows (\ref{BMbeta+}) and  (\ref{BMbeta-}).

%

\subsection{Normal Inverse Gaussian}
Recall that $Z_t^z  = z + W_{\mbox{\rm IG}(t)} + \theta \mbox{\rm IG}(t) + bt$ 
where $\mu \in \mathbb{R}$, 
and we denote the inverse Gaussian subordinator 
$\mbox{\rm IG}(t) = \inf\{s>0;B_s + \mu s > \delta t\}$ where $\delta > 0$.
Then the density function of $Z_t^z$ 
with respect to the Lebesgue measure is 
\begin{eqnarray*}
  y \mapsto 
  \frac{1}{\pi}
  \sqrt{ 
    \frac{\theta^2+\mu^2}
         {\left(\frac{y-z-bt}{\delta t}\right)^2+1 } 
  }
  e^{\mu \delta t + \theta(y-z-bt)}
  K_1\left(
    \delta t 
    \sqrt{(\theta^2+\mu^2)(1+ \left(\frac{y - z - bt}{\delta t}\right)^2)}
  \right)
\end{eqnarray*}
where $K_1$ is the Bessel function of the third kind (See \cite{DA}). 

We use a estimation of the Bessel function from page378 of \cite{AB};
\begin{eqnarray}\label{mBessel1}
  K_1( y ) &\sim& \sqrt{\frac{\pi}{2y}} e^{-y}.
\end{eqnarray}
We set $\alpha(t)$ as
$$
	\alpha(t) 
		=
		\frac{ 1 }{ \sqrt{t} }
		e^{ t\left(
			\mu \delta 
			- \theta b
			- \sqrt{( b^2 + \delta^2 )( \theta^2 + \mu^2 )}
		\right)}.
$$
Then it holds that
\begin{eqnarray*}
\frac{d\hat\nu_t}{d\bar\nu}(x)
	&\to&
		\frac{
			\delta
		}{
			\sqrt{ 2\pi }
		} 
		\left( 
			\frac{ \theta^2 + \mu^2 }
			     { ( b^2 + \delta^2 )^3 } 
		\right)^{\frac{1}{4}}
		e^{ \left(
				\theta
				+
				b
				\sqrt{ \frac{
					\theta^2 + \mu^2
				}{
					b^2 + \delta^2
				}}
			\right)
			( x-z_0 )
		}, \\
		\sup_t \frac{d\hat\nu_t}{d\bar\nu}(x) 
		&\le&
		\frac{
			1
		}{
			\sqrt{ 2\pi }
		} 
		\left( 
			\frac{\theta^2 + \mu^2}{ \delta^2 } 
		\right)^{\frac{1}{4}}
		e^{ 
			\left(
				\theta
				+ b
				\frac{
					2\sqrt{ \theta^2 + \mu^2 } 
				}{
					\delta + \sqrt{b^2+\delta^2}
				}
			\right)
			(x-z_0)
		}.
\end{eqnarray*}
This shows (\ref{BMsup}).

Note that
$
    \delta t 
    \sqrt{(\theta^2+\mu^2)(1+ \left(\frac{z - z_0 - bt}{\delta t}\right)^2)}\to\infty
$
when $|z|\to\infty$.
By using (\ref{mBessel1}) again, 
we obtain that the density function of $Z_\tau^0$ decays in the order as
\begin{eqnarray*}
	f_{Z_\tau^0}(z) \sim |z|^{-\frac{3}{2}}e^{\theta z - \sqrt{\theta^2+\mu^2}|z| }
\end{eqnarray*}
when $|z|\to\infty$.
By using L'Hopital's theorem, we have (\ref{NIGbeta+}) and  (\ref{NIGbeta-}).


\begin{thebibliography}{}
\bibitem{AB}
Abramowith, M. and Stegun, I. A. :
Handbook of mathematical functions.
\bibitem{AlbPredota} 
Albrecher, H. and Predota, M. (2004):
Op Asian option pricing for NIG L\'{e}vy processes.
Journal of Computational and Applied Mathematics, 172, 153-168
\bibitem{Armstrong} 
Armstrong, G. F. (2001):
Valuation formulae for window barrier options.
Applied Mathematics Finance, 8, 197-208
\bibitem{Benhamou} 
Benhamou, E. (2002):
Fast Fourier transform for discrete Asian options.
Journal of Computational Finance, 6,
\bibitem{Hans2} 
Bermin, H. P. (2000):
Hedging lookback and partial lookback options
using Malliavin calculus.
Applied Mathematical Finance, 7, 75-100
\bibitem{Hans1} 
Bermin, H. P. (2002): 
A General Approach to Hedging Options:Applications 
to Barrier and Partial Barrier Options.
Mathematical Finance, 12, 199-218
\bibitem{ContTankov} 
Cont, R. and Tankov, P. (2004):
Financial Modeling with Jump Processes
(Champan and Hall/CRC Financial Mathematics Series)
:Crc Pr I Llc
\bibitem{DA} 
David, A. (2004): 
Levy processes and stochastic calculus. 
Cambridge Studies in Advanced Mathematics,
Cambridge University Press, Cambridge
\bibitem{GY} 
Geman, H. and Yor, M. (1993): 
Bessel Process, Asian Options, and Perpetuities. 
Mathematical Finance, 3, 349-375
\bibitem{HeynenKat} 
Heynen, R. C. and Kat, H. M. (1994):
Partial Barrier Options.
The Journal of Financial Engineering, 3, 253-274
\bibitem{NNCC} 
Hilber, H., Reich, N., Schwab, C. and Winter, C. (2009):
Numerical methods for L\'{e}vy processes.
Finance and Stochastic, 13, 471-500
\bibitem{HY} 
Hishida, Y. and Yasutomi K. (2005):
On the asymptotic behavior of the prices of Asian Option.
Asia-Pacific Financial Markets, 12, 289-306
\bibitem{KudLev} 
Kudryavstev, O. and Levendorskii, S. (2009):
Fast and accurate pricing of 
barrier options under L\'{e}vy processes.
Finance and Stochastic, 13, 531-562
\bibitem{KT0} 
Kunitomo, N. Takahashi, A. (2003):
On Validity of the Asymptotic Expansion Approach in Contigent
Claim Analysis.
Annals of Applied Probability, 13, 914-952
\bibitem{Linetsky1} 
Linetsky, V. (2004):
Spectral Expansions for Asian (Average Price) Options.
Operations Research, 52. 856-867
\bibitem{VG} 
Madan, D., Carr, P. and Chang, E. (1998):
The Variance Gamma Process and Option Pricing.
Review of Finance, 2, 79-105
\bibitem{Merton} 
Merton, R. C. (1973):
Theory of rational option pricing.
Bell Journal of Economics, 4, 141-183
\bibitem{ReinerRubin} 
Reiner, E. and Rubinstein, M. (1991): 
Breaking Down the Barriers.
Risk, 4, 28-35
\end{thebibliography}
\end{document}